\def\algbackskip{\hskip-\ALG@thistlm}
\newtheorem{theorem}{Theorem}[section]
\newtheorem{proposition}[theorem]{Proposition}
\theoremstyle{definition}
\newtheorem{definition}[theorem]{Definition}
\newcommand{\B}{\mathbb{B}}
\newcommand{\A}{\mathcal{A}}
\newcommand{\N}{\mathcal{N}}
\newcommand{\E}{\mathcal{E}}
\newcommand{\C}{\mathcal{C}}
\title{Node and edge control strategy identification via trap spaces in Boolean networks}
\author[1,2]{Laura Cifuentes-Fontanals} 
\author[1]{Elisa Tonello} 
\author[1]{Heike Siebert}
\affil[1]{Freie Universität Berlin, Germany}
\affil[2]{Max Planck Institute for Molecular Genetics, Berlin, Germany}
\date{}
\begin{document}

\maketitle

\subsubsection*{Abstract}

The study of control mechanisms of biological systems allows for interesting applications in bioengineering and medicine, for instance in cell reprogramming or drug target identification. A control strategy often consists of a set of interventions that, by fixing the values of some components, ensure that the long term dynamics of the controlled system is in a desired state. A common approach to control in the Boolean framework consists in checking how the fixed values propagate through the network, to establish whether the effect of percolating the interventions is sufficient to induce the target state. Although methods based uniquely on value percolation allow for efficient computation, they can miss many control strategies. Exhaustive methods for control strategy identification, on the other hand, often entail high computational costs. In order to increase the number of control strategies identified while still benefiting from an efficient implementation, we introduce a method based on value percolation that uses trap spaces, subspaces of the state space that are closed with respect to the dynamics, and that can usually be easily computed in biological networks. The approach allows for node interventions, which fix the value of certain components, and edge interventions, which fix the effect that one component has on another. The method is implemented using Answer Set Programming, extending an existing efficient implementation of value percolation to allow for the use of trap spaces and edge control. The applicability of the approach is studied for different control targets in a biological case study, identifying in all cases new control strategies that would escape usual percolation-based methods.

%%%%%%%%%%%%%%%%%%%%%%%%%%%%%%%%%%%%%%%%%%%%%

\section{Introduction}

Reprogramming a cell to induce a desired cell fate or the identification of drug targets for disease treatment are examples of the multiple applications of the study of control mechanisms in biological systems. Mathematical modeling can help to predict potential control candidates in silico, which might reduce the need for the usually costly and time-consuming experimental testing \cite{sinergies}. Among the different mathematical frameworks, Boolean modeling stands out for its ability to capture the qualitative behavior and dynamics of biological systems even if there is a lack of detailed quantitative data. In the Boolean framework, each component is represented by a binary node that only admits two activity levels, 0 and 1, which might denote for instance in a gene regulatory network if a gene is active or not or if the concentration of a certain compound is above or below a certain threshold. The interactions between the components are described by logical functions. Despite its simplicity, Boolean modeling has been shown to reliantly capture the relevant dynamics of the modeled biological systems \cite{mapk_network, tlgl_network}. 

Control in Boolean networks is a broad field and many different approaches have been developed dealing with different scenarios and goals. Commonly, one wants to influence the system in such a way that the asymptotic dynamics fulfill the desired properties. Thus the focus is on introducing perturbations that influence the system attractors and their reachability properties. Some approaches aim at leading the system towards a desired attractor, from a certain initial state \cite{control_basins_seq} (source-target control) or from any possible initial state \cite{control_motifs} (full network control). We refer to this type of control as attractor control. In other cases, only the location of the attractors in a particular subspace is of importance, for instance if a certain phenotype defined by the values of a small set of marker components is the desired outcome. Several approaches have been developed to deal with this control problem, known as target control \cite{control_intervention_sets, target_control, control_trap_spaces, control_bcn}. 

Since control aims at manipulating dynamical properties of a model, it usually depends on the way the dynamics are derived from the Boolean function representing the system. The so-called synchronous update that updates every component in the model in each step gives rise to deterministic  dynamics and is highly amenable to computational analysis. Here, the only cycles in the dynamics are attractors and control can be interpreted as forcing all trajectories starting in an initial state to reach the desired target set. However, non-deterministic asynchronous updates that allow for different (if unknown) time delays of the possible component value changes have been shown to more realistically capture the behavior of biological systems. In this setting, there can be non-attractive cycles in the dynamics that trajectories can leave after entering. In application, trajectories that stay indefinitely in a non-attractive cycle are usually taken as modeling artifacts and are not further considered. Similarly, control approaches neglect such trajectories and solely aim at enforcing attractor properties. Several control methods have been developed specifically for synchronous \cite{control_algebra, control_multivalued_algebra} or asynchronous dynamics \cite{control_motifs, cabean} while others are applicable to any dynamics \cite{control_intervention_sets, control_trap_spaces, control_model_checking}.

Different types of model intervention targets can be considered. Most approaches use interventions that fix the state of a component to a certain value \cite{control_motifs, control_basins_seq, control_trap_spaces, control_model_checking}. This type of intervention, called node intervention or node control, can represent for instance the knockout or sustained activation of a gene in a gene-regulatory network. However, sometimes a certain node intervention might not be possible, either because it is unfeasible in practice or because the target component plays a potentially crucial role in some processes that should not be disrupted. In such cases, it is useful to consider interventions targeting only a specific interaction between two components, leaving the rest of the interactions unaltered. This type of intervention is known as edge intervention or edge control. Several approaches exist for identifying edge control strategies, dealing with target control in asynchronous \cite{control_bcn} or synchronous dynamics \cite{control_algebra, control_multivalued_algebra}.

Multiple approaches have been developed to identify and compute control strategies for Boolean networks in the different settings, using tools ranging from analysis of the stable motifs of the systems  \cite{control_motifs} to exploiting computational algebra methods \cite{control_algebra}. A core idea common to many of them is to utilize value percolation to test the effect of permanently fixing certain component values on the dynamical behavior. Methods based on value percolation can be implemented efficiently \cite{control_asp}. On the other hand, they are quite restrictive and might miss many possible control strategies. To bridge this gap, recent works have dealt with attractor control using basins of attraction \cite{cabean} or aimed at an exhaustive enumeration of all possible control strategies using model checking queries \cite{control_model_checking}. However, these approaches might entail high computational resources.

In order to benefit from the efficiency of value percolation and increase the number of identified control strategies, we explore the use of trap spaces for target control. Trap spaces are subspaces of the state space that are closed with respect to the dynamics. Consequently, a trap space contains at least one attactor. Trap spaces are in many cases good approximations of attractors \cite{klarner_attractor_approx} and can be efficiently computed for relatively large networks \cite{klarner_trap_spaces}. Trap spaces can be used as an intermediate control step since by leading the system to a trap space, one ensures that only attractors inside the trap space are reachable. Applying the usual percolation techniques to target trap spaces containing only desirable attractors can potentially uncover new control strategies for both node and edge control.

This article is an extended version of \cite{control_trap_spaces}, in which we developed a method for computing node control strategies utilizing value percolation in combination with trap space analysis for target control, potentially yielding richer solution sets while keeping computational efficiency.  We significantly broaden the theoretical and computational framework to include edge control, reworking the original material to obtain a consistent, comprehensive and flexible approach. For efficient implementation, we use a logical programing approach, namely Answer Set Programming (ASP), extending the works from \cite{control_asp} and \cite{control_asp_trap_spaces}. Finally, building on a case study from \cite{control_trap_spaces}, we show the applicability of the method and illustrate the potential inherent in comprehensive analysis using both node and edge control.

We start with a general overview about Boolean modeling (\Cref{Background}). Then, we introduce the different types of interventions considered in this work, node and edge, their effect on the controlled system and the theoretical basis for control strategy identification using value percolation and trap spaces (\Cref{CS}). The implementation of the method using Answer Set Programing is detailed in \Cref{Computation}. Finally, in \Cref{Application}, we show the applicability of our approach to a cell fate decision network.

%%%%%%%%%%%%%%%%%%%%%%%%%%%%%%%%%%%%%%%%%%%%%

\section{Background} \label{Background}

We define a \emph{Boolean network} on $n$ variables as a function $f \colon \B^n \rightarrow \B^n$, with $\B = \{0,1\}$. The set of variables or components $\{0, \dots n\}$ is denoted by $V$. The \emph{state space} of a Boolean function is denoted by $\B^n$ and every $x \in \B^n$ is a \emph{state} of the state space. We define the \emph{interaction graph} of a Boolean network $f$ as the labelled multi-digraph $(V,E)$ with $E \subseteq V \times V \times \{+,-\}$, admitting an edge from $i$ to $j$ if there exists $x \in \B^n$, such that $s = (f_j(\bar{x}^i) - f_j(x))(\bar{x}^i_i - x_i) \neq 0$, with $\bar{x}^i_k = 1 - x_k$ for $k = i$ and $\bar{x}^i_k = x_k$ for all $i \neq k$. The label of the edge is given by the sign of $s$. Thus, the interaction graph captures the activation (positive) and inhibition (negative) relations between the components of a Boolean network.

The dynamics of a Boolean network is defined by the \emph{state transition graph (STG)}, a directed graph with node set $\B^n$. Given a Boolean function $f$, we can define different dynamics depending on the way the components are updated, giving raise to different state transition graphs. For example, in the \emph{synchronous dynamics $SD(f)$} all the components that can be updated are updated at the same time, whereas in the \emph{asynchronous dynamics $AD(f)$} only one component is updated at a time. Thus, the synchronous state transition graph has an edge from a state $x$ to a state $y$ if $x \neq y$ and $f(x) = y$, whereas the asynchronous state transition graph has an edge from a state $x$ to a state $y$ if $f_i(x) = y_i \neq x_i$ for some $i \in V$ and $x_j = y_j$ for all $i \neq j \in V$. The \emph{generalized asynchronous dynamics $GD(f)$} includes transitions that update a set of components at a time. Consequently, the corresponding state transition graph has an edge from a state $x$ to a state $y$ if there exists $\emptyset \neq I \subseteq V$ such that $f_i(x) = y_i \neq x_i$ for $i \in I$ and $x_j = y_j$ for all $i \notin I$. In order to capture the different time scales that might coexist in a biological system, the asynchronous dynamics is often used. The work presented here is valid for any of the three dynamics introduced. We use $D(f)$ to refer to any of these dynamics.

The long term dynamics of the system is captured by the attractors. An \emph{attractor} is a minimal trap set, that is, a minimal set of states that is closed with respect to the dynamics. Attractors correspond to the terminal strongly connected components in the STG. An attractor $\A \subseteq \B^n$ is called \emph{steady state} when $|\A| = 1$ and \emph{cyclic attractor} when $|\A| > 1$. Steady states in a biological system might be associated with different cell fates or cell types and cyclic attractors with different cell cycles or cell processes with oscillatory behaviours.

Given a set of components $I \subseteq V$ and a state $d \in \B^n$, the \emph{subspace induced by $I$ and $d$} is defined as $\Sigma(I,d) = \{ x \in \B^n \ | \ x_i = d_i$ for all $i \in I \}$. We denote subspaces by writing the value $0$ or $1$ for the fixed variables and ${*}$ for the free ones. For example, the subspace ${*}{*}10$ denotes the set of states $\{ x \in \B^n \ | \ x_3 = 1$ and $x_4 = 0 \}$. We define the size of a subspace as the number of fixed variables. A subspace that is closed with respect to the dynamics is a \emph{trap space}. Trap spaces are invariant with respect to the type of update, contrary to attractors and trap sets, which might be different in different dynamics.

%%%%%%%%%%%%%%%%%%%%%%%%%%%%%%%%%%%%%%%%%%%%%

\section{Control strategies} \label{CS}

This work deals with target control and considers two types of interventions: node interventions and edge interventions. Node interventions fix a certain component to a certain value. More formally, a node intervention $(i,c)$, with $i \in V$ and $c \in \B$, sets the component $i$ and its regulatory function $f_i$ to the value $c$. This type of intervention can be seen, for example in a gene regulatory network, as the knock-out or permanent activation of a gene.

In the context of practical applications in biological systems, a node intervention is not always possible, for example if the component represents a gene or protein that is vital for other processes. In order to achieve the desired effect without altering the rest of the system, interventions acting only on the interaction between two components can be considered. This type of interventions, which would only alter the effect of a specific component on another without affecting the rest, are known as edge interventions. More formally, an edge intervention $(i,j,c)$, with $i,j \in V$ and $c \in \B$, fixes the value of the component $i$ in the regulatory function $f_j$ to the value $c$. By definition, when a regulatory function $f_j$ depends on a component $i$, there exists an edge from $i$ to $j$ in the interaction graph. An edge intervention can be seen as the deletion of such an edge in the interaction graph, since $f_j$ does not depend on the component $i$ after the intervention. In a biological system, an edge intervention can represent for example the modification of a protein that prevents it from binding to a certain component, while still allowing it to interact with the rest of the system. 

Given a Boolean network $f$ and a set of interventions $\mathcal{C}$, which might be node or edge interventions, we consider the simultaneous application of all interventions in $\mathcal{C}$ on $f$.
We write $f^{\mathcal{C}}$ for the function resulting from the application of the interventions in the set $\mathcal{C}$. In the first section we give the formal definition for the function $f^{\mathcal{C}}$ and a control strategy. In the second section, we recall properties of value percolation and establish the basis for control strategy identification for node and edge control.

\subsection{Controlled networks and control strategies}

We start by establishing the basic conditions that a set of interventions needs to satisfy in order to be consistent. These conditions aim at preventing, for instance, that a node intervention fixes a component to 1 while another is fixing the same component to 0.

Consider $\N \subseteq V \times \B$ and $\E \subseteq V \times V \times \B$. We call $\C = \N \cup \E$ a \emph{consistent} set of interventions if the following conditions are satisfied:
\begin{enumerate}[label=(\roman*)]
\item for all $i,j \in V$ and $c,c' \in \B$, if $(j,c) \in \C$, then $(i,j,c') \notin \C$;
\item for all $i,j \in V$ and $c,c' \in \B$, if $(i,c) \in \mathcal{C}$, then $(i,\bar{c}) \notin \mathcal{C}$ and $(i,j,c') \notin \mathcal{C}$;
\item for all $i,j \in V$ and $c \in \B$, if $(i,j,c) \in \mathcal{C}$, then $(i,j,\bar{c}) \notin \mathcal{C}$. 
\end{enumerate}

The first condition ensures that node and edge interventions do not act on the same target. The second guarantees that when a node intervention fixes the value of a component, no other intervention fixes that component. The last one prevents edge interventions from fixing the same component to different values in the same regulatory function.

In order to describe the effect of node and edge interventions in a Boolean network, we first define a function $h^{j,\mathcal{C}} \colon \B^n \rightarrow \B^n$ that, given a set of consistent interventions $\mathcal{C}$ and a component $j \in V$, captures the effect of fixing the components involved in the interventions acting on the regulatory function of $j$. For every $x \in \B^n$, $i,j \in V$, we set
$$
h^{j,\mathcal{C}}(x)_i = \left\{
\begin{array}{ll}
c & \text{if } (i,c) \in \mathcal{C} \text{ or } (i,j,c) \in \mathcal{C} \text{ for some } c \in \B, \\
x_i & \text{otherwise}.
\end{array}
\right.
$$

Note that $h^{j,\mathcal{C}}$ is well-defined when $\mathcal{C}$ is consistent. Given a Boolean network $f$ and a consistent set of interventions $\mathcal{C}$, we can now define the controlled network $f^{\mathcal{C}}$. For every $k \in V$,
$$
f_k^{\mathcal{C}} = \left\{
\begin{array}{ll}
c & \text{if } (k,c) \in \mathcal{C} \text{ for some } c \in \B, \\
f_k \circ h^{k, \mathcal{C}} & \text{otherwise}. \\
\end{array}
\right.
$$

The interventions considered in node control fix certain components (nodes) to certain values. Thus, if a set of interventions consists exclusively of node interventions ($\C = \mathcal{N}$), then it can be associated with a subspace $\Sigma(I,d)$, with $I \subseteq V$ and $d \in \B^n$ such that $(i,d_i) \in \mathcal{N}$ if and only if $i \in I$ \cite{control_trap_spaces}. When all interventions are edge interventions ($\C = \mathcal{E}$) the controlled regulatory function for any component $k$ is given by $f^{\mathcal{E}}_k = f_k \circ h^{k, \mathcal{E}}$.

Given a node intervention $(i,c)$, one could consider a set of edge interventions that fix the regulatory function $f_i$ to $c$. For instance, if a node $i$ has only one incoming edge from $j$ in the interaction graph of $f$, that is, the regulatory function satisfies $f_i(x) = x_j$ or $f_i(x) = \bar{x}_j$, the node intervention $(i,c)$ is equivalent in the long-term dynamics to the edge intervention $(j,i,c)$ or $(j,i,\bar{c})$ respectively. Note that a node intervention might have multiple equivalent sets of edge interventions. For example, if $f_i(x) = x_j \lor x_k$ for some $i,j,k \in V$, using either the edge intervention $(j,i,1)$ or the edge intervention $(k,i,1)$ would have the same long-term effect in the dynamics as the node intervention $(i,1)$.

\begin{definition}
\label{def:cs}
Given a Boolean network $f$ and a subspace $P \subseteq \B^n$, a set of interventions $\mathcal{C}$ is a \emph{control strategy for the target $P$} in $D(f)$ if $\A \subseteq P$ for any attractor $\A$ of $D(f^{\mathcal{C}})$.
\end{definition}

A set of interventions defines a control strategy for a given target when all the attractors of the controlled network are contained in the target. When considering only node control, since a set of node interventions $\mathcal{N}$ defines a subspace, a control strategy can also be identified with the subspace associated with $\N$ (see \cite{control_trap_spaces,control_model_checking}).

We define the size of a control strategy $\mathcal{C}$ as the number of interventions $|\mathcal{C}|$. In the case of $\E = \emptyset$, the number of interventions corresponds to the number of fixed variables. In practical applications, we are interested in intervention sets that are minimal with respect to inclusion. This is a natural approach when considering interaction sets that contain only node interventions or only edge interventions. For simplicity, in this work we use the same definition of minimality for intervention sets that mix edge and node interventions.
Depending on the context, the resources required to implement different interventions can vary, and more sophisticated objective functions might take these differences into account.

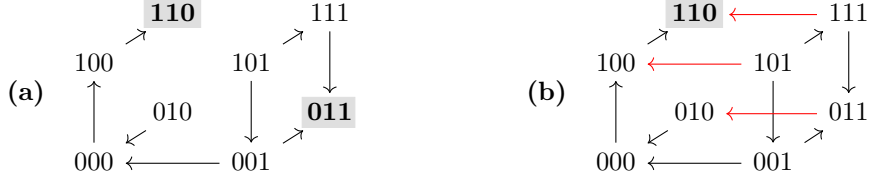
\begin{figure}[h!]
\centering
\begin{minipage}{0.05\linewidth}
\flushright
\textbf{(a)}
\end{minipage}
\begin{minipage}{0.38\linewidth}
\flushleft
\tikz[overlay]{
\filldraw[fill = gray!25, draw=gray!25] (1.2,0.88) rectangle (1.95,1.28);
\filldraw[fill = gray!25, draw=gray!25] (3.3,-0.45) rectangle (4,-0.05);}
\begin{tikzcd}[column sep=5, row sep=5]
 & \textbf{110} & & 111 \arrow[dd] \\
100 \arrow[ur] & & 101 \arrow[dd] \arrow[ru] & \\
 & 010 \arrow[ld] & & \textbf{011} \\
000 \arrow[uu] & & 001 \arrow[ll] \arrow[ru] \\
\end{tikzcd}
\end{minipage}
\begin{minipage}{0.05\linewidth}
\flushright
\textbf{(b)}
\end{minipage}
\begin{minipage}{0.38\linewidth}
\flushleft
\tikz[overlay]{
\filldraw[fill = gray!25, draw=gray!25] (1.2,0.88) rectangle (1.95,1.28);}
\begin{tikzcd}[column sep=5, row sep=5]
 & \textbf{110} & & 111 \arrow[dd] \arrow[ll, red] \\
100 \arrow[ur] & & 101 \arrow[ll, red] \arrow[dd] \arrow[ru] & \\
 & 010 \arrow[ld] & & 011 \arrow[ll, red] \\
000 \arrow[uu] & & 001 \arrow[ll] \arrow[ru] \\
\end{tikzcd}
\end{minipage}
\caption{(a) Asynchronous dynamics of the Boolean function $f(x) = (x_1 \bar{x}_3 \lor \bar{x}_2 \bar{x}_3$, $x_1 \lor x_3 $, $x_1 x_3 \lor x_2 x_3)$. (b) Asynchronous dynamics of the Boolean function $f^{\N}(x) = (x_1 \lor \bar{x}_2$, $x_1$, $0)$ with $\N = \{(3,0)\}$. Transitions that vary between $AD(f)$ and $AD(f^{\mathcal{N}})$ are marked in red. Attractors are marked in bold. $\N$ is a control strategy for $P = 110$ in $AD(f)$. $\B^n$ does not percolate to $P$ under $f^{\N}$ but percolates to the selected trap space $T$.}
\label{ex:nonperc}
\end{figure}

\begin{figure}[h!]
\centering
\begin{minipage}{0.05\linewidth}
\flushright
\textbf{(a)}
\end{minipage}
\begin{minipage}{0.38\linewidth}
\flushleft
\tikz[overlay]{
\filldraw[fill = gray!25, draw=gray!25] (1.3,0.88) rectangle (2,1.28);
\filldraw[fill = gray!25, draw=gray!25] (0.2,-1.1) rectangle (0.95,-0.7);
\filldraw[fill = gray!25, draw=gray!25] (3.35,0.88) rectangle (4.1,1.28);}
\begin{tikzcd}[column sep=5, row sep=5]
 & \textbf{110} & & \textbf{111} \\
100 \arrow[dd, red] & & 101 \arrow[ll] & \\
 & 010 \arrow[uu] \arrow[dl] & & 011 \arrow[ll] \arrow[uu] \arrow[dl] \\
\textbf{000} & & 001 \arrow[uu] \arrow[ll] \\
\end{tikzcd}
\end{minipage}
\begin{minipage}{0.05\linewidth}
\flushright
\textbf{(b)}
\end{minipage}
\begin{minipage}{0.38\linewidth}
\flushleft
\tikz[overlay]{
\filldraw[fill = gray!25, draw=gray!25] (1.3,0.88) rectangle (2,1.28);
\filldraw[fill = gray!25, draw=gray!25] (0.2, 0.23) rectangle (0.95,0.63);}
\begin{tikzcd}[column sep=5, row sep=5]
 & \textbf{110} & & 111 \arrow[ll, red]\\
\textbf{100} & & 101 \arrow[ll] & \\
 & 010 \arrow[uu] \arrow[dl] & & 011 \arrow[ll] \arrow[uu] \arrow[dl] \\
000 \arrow[uu, red] & & 001 \arrow[uu] \arrow[ll] \\
\end{tikzcd}
\end{minipage}
\caption{(a) Asynchronous dynamics of the Boolean function $f(x) = (x_2 \lor x_3$, $x_1 x_2$, $x_1 x_2 x_3)$. (b) Asynchronous dynamics of the Boolean function $f^{\mathcal{E}}(x) = (1, x_1 x_2, 0)$, with $\mathcal{E} = \{(2,1,1), (2,3,0)\}$. Transitions that vary between $AD(f)$ and $AD(f^{\mathcal{E}})$ are marked in red. Attractors are marked in bold. $\mathcal{E}$ is a control strategy for $P = 1{*}0$. There is no control strategy for $P$ consisting only of a node intervention on the second component.}
\label{ex:csedge}
\end{figure}
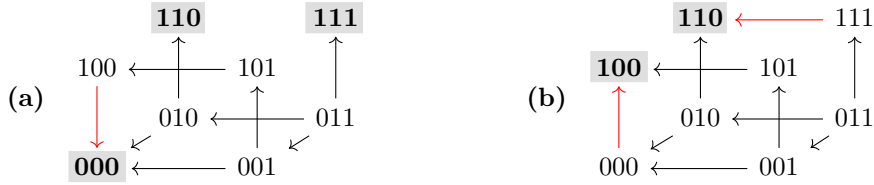

An example of a control strategy using node interventions is shown in \Cref{ex:nonperc}, where the set $\N = \{(3,0)\}$, associated with the subspace ${*}{*}0$, is a control strategy for the one-element target subspace $P = 110$, since $f^{\N}$ only has one attractor that is the steady state $110$. \Cref{ex:csedge} shows an example where the set of edge interventions $\mathcal{E} = \{(2,1,1),(2,3,0)\}$ is a control strategy for the target $P = 1{*}0$. Note that, if we do not allow interventions on the variables fixed in the target $P$, there are no node control strategies for $P$, since $000 \notin P$ is a steady state of $f^{\N_0}$, with $\N_0 = \{(2,0)\}$, and $111 \notin P$ is a steady state of $f^{\N_1}$, with $\N_1 = \{(2,1)\}$. Thus, in this scenario, control can only be achieved by using edge interventions. This example illustrates how edge interventions can broaden the possibilities for control.

\subsection{Value percolation and control strategy identification} \label{main_cs}

In the following, we recall the concept of value percolation and some properties of percolated subspaces and trap spaces that are helpful in the identification of control strategies.

Given a Boolean function $f$, we define the \emph{percolation function} with respect to $f$ as the function $F(f) \colon \mathcal{S} \rightarrow \mathcal{S}$, $S \mapsto F(f)(S)$, where $\mathcal{S}$ is the set of all subspaces in $\B^n$ and $F(f)(S)$ is the smallest subspace that contains $f(S)$ with respect to inclusion. That is, given a subspace $S \in \mathcal{S}$, $F(f)(S) = \Sigma(I,d)$ with $I = \{i\in V\ |\ |f_i(S)| = 1\}$ and $d \in \B^n$ such that $d_i = f_i(x)$ for all $x \in S$ for $i \in I$. Given two subspaces $S,S' \subseteq \B^n$, we say that the subspace \emph{$S$ percolates to $S'$ under $f$} if and only if there exists $k \geq 0$ such that $F(f)^k(S) = S'$.

For a set of interventions $\C$, $F(f^{\C})$ captures the propagation of the fixed values through the network. We can use the definition of percolation to formalise the notion of equivalence between intervention sets: we say that two intervention sets $\C_1$ and $\C_2$ are \emph{equivalent} if $F(f^{\C_1})(\B^n) = F(f^{\C_2})(\B^n)$.

Note that if $T$ is a trap space, $T' = F(f)(T)$ is also a trap space and $T' \subseteq T$. Moreover, for every $x \in T$ there exists a path in $D(f)$ from $x$ to some $y \in T'$.
Detailed proofs of these properties of subspace percolation can be found in \cite{control_model_checking}. A consequence of these observations is that, given a trap space $T$ that percolates to a subspace $S$, there cannot be an attractor $\A \subseteq T$ that is not contained in $S$, since for every state $x \in T$ there exists a path to some $y \in S$. Taking $T = \B^n$, we derive the following result.

\begin{proposition}
\label{prop_PO}
Let $P \subseteq \B^n$ be a subspace and $f$ a Boolean function. Let $\mathcal{C}$ be a set of interventions such that $\B^n$ percolates to $P$ under $f^{\mathcal{C}}$. Then $\mathcal{C}$ defines a control strategy in $D(f)$ for $P$.
\end{proposition}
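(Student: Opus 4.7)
The plan is to leverage the two facts recalled immediately before the proposition: iterating the percolation map on a trap space produces a nested chain of trap spaces, and from every state of a trap space $T$ there is a path in $D(f)$ to some state of $F(f)(T)$. Both facts apply equally well to $f^{\mathcal{C}}$, since they are stated for an arbitrary Boolean function.

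First I would note that $\B^n$ is trivially a trap space of $D(f^{\mathcal{C}})$. Applying the first recalled fact inductively to $f^{\mathcal{C}}$, every iterate $F(f^{\mathcal{C}})^k(\B^n)$ is a trap space, and the hypothesis gives some $k \geq 0$ with $F(f^{\mathcal{C}})^k(\B^n) = P$. Hence $P$ is itself a trap space of $D(f^{\mathcal{C}})$. Next, by iterating the second recalled fact $k$ times along the descending chain $\B^n \supseteq F(f^{\mathcal{C}})(\B^n) \supseteq \dots \supseteq P$, I conclude that from any $x \in \B^n$ there exists a path in $D(f^{\mathcal{C}})$ to some $y \in P$.

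Now let $\A$ be any attractor of $D(f^{\mathcal{C}})$ and pick $x \in \A$. By the previous step, there is a path in $D(f^{\mathcal{C}})$ from $x$ to some $y \in P$. Since $\A$ is closed under the dynamics of $f^{\mathcal{C}}$, this path must remain in $\A$, so $y \in \A \cap P$. The intersection $\A \cap P$ is then a non-empty set that is closed under $D(f^{\mathcal{C}})$, being the intersection of two such closed sets. By the minimality of $\A$ as a trap set, we obtain $\A \cap P = \A$, that is $\A \subseteq P$. Since $\A$ was arbitrary, $\mathcal{C}$ is a control strategy for $P$ by \Cref{def:cs}.

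The only point requiring mild care is the final minimality step: one must make sure $\A \cap P$ is interpreted as a closed set of $D(f^{\mathcal{C}})$ (not of $D(f)$), so that minimality of $\A$ as an attractor of the controlled dynamics applies. Everything else is a direct unwinding of definitions and an invocation of the two previously recalled properties of percolation.
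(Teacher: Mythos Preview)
Your proof is correct and follows essentially the same route as the paper, which derives the proposition directly from the two recalled percolation properties applied to $f^{\mathcal{C}}$ with $T=\B^n$. You have simply made explicit the minimality argument that the paper leaves implicit in the sentence preceding the proposition.
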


We refer to the control strategies satisfying the conditions of \Cref{prop_PO} as \emph{control strategies by direct percolation}. An example of such a control strategy is shown in \Cref{ex:perc}. Several approaches to the identification of control strategies by direct percolation using node control have been developed \cite{control_intervention_sets, target_control} and there exist implementations that identify all control strategies by direct percolation efficiently \cite{Caspo}. However, there are still many control strategies that do not fulfill the conditions of \Cref{prop_PO}. \Cref{ex:nonperc} shows an example of control strategy that does not percolate to the target subspace.

\begin{figure}[h!]
\centering
\begin{minipage}{0.05\linewidth}
\flushright
\textbf{(a)}
\end{minipage}
\begin{minipage}{0.38\linewidth}
\flushleft
\tikz[overlay]{
\filldraw[fill = gray!25, draw=gray!25] (1.3,0.88) rectangle (2.05,1.28);
\filldraw[fill = gray!25, draw=gray!25] (0.2,-1.1) rectangle (0.95,-0.7);
\filldraw[fill = gray!25, draw=gray!25] (2.35,-1.1) rectangle (3.1,-0.7);}
\begin{tikzcd}[column sep=5, row sep=5]
 & \textbf{110} & & 111 \arrow[ll] \\
100 \arrow[ur] & & 101 \arrow[ll] \arrow[dd, red] \arrow[ru] & \\
 & 010 \arrow[ld, red] \arrow[uu] & & 011 \arrow[uu] \arrow[dl, red] \\
\textbf{000} & & \textbf{001} \\
\end{tikzcd}
\end{minipage}
\begin{minipage}{0.05\linewidth}
\flushright
\textbf{(b)}
\end{minipage}
\begin{minipage}{0.38\linewidth}
\flushleft
\tikz[overlay]{
\filldraw[fill = gray!25, draw=gray!25] (1.2,0.88) rectangle (1.95,1.28);}
\begin{tikzcd}[column sep=5, row sep=5]
 & \textbf{110} & & 111 \arrow[ll] \\
100 \arrow[ur] & & 101 \arrow[ll] \arrow[ru] & \\
 & 010 \arrow[uu] & & 011 \arrow[uu] \arrow[ll, red] \\
000 \arrow[uu, red] \arrow[ur, red] & & 001 \arrow[uu, red] \arrow[ur, red] \arrow[ll, red] \\
\end{tikzcd}
\end{minipage}
\caption{Asynchronous dynamics of the Boolean function $f(x) = (x_2 \lor x_1 \bar{x}_3$, $x_1$, $\bar{x}_1 x_3)$ (left) and $f^{\N}(x) = (1$, $1$, $0)$ with $\N = \{(1,1)\}$ (right). Transitions that vary between $AD(f)$ and $AD(f^{\mathcal{N}})$ are marked in red. Attractors are marked in bold. $\N$ is a control strategy for $P = 110$ in $AD(f)$. $\B^n$ percolates to $P$ under $f^{\N}$.}
\label{ex:perc}
\end{figure}
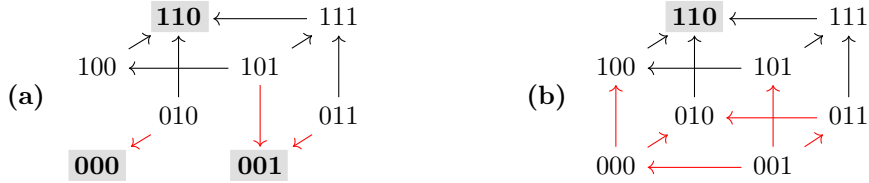

In order to exploit the efficiency of value percolation to identify more control strategies, we developed a method based on percolation that uses trap spaces \cite{control_trap_spaces}. As mentioned before, trap spaces are subspaces closed for the dynamics. Thus, each trap space contains at least one attractor. From all trap spaces of a Boolean function, we select the ones that contain only attractors belonging to the target subspace. We call such trap spaces \emph{selected trap spaces}. \Cref{prop_TS} introduces sufficient conditions for a subspace to be a control strategy for a target via a selected trap space.

\begin{proposition}
\label{prop_TS}
Let $P \subseteq \B^n$ be a subspace and $f$ a Boolean function. Let $T = \Sigma(I,d)$ be a trap space such that if $\A \subseteq T$ is an attractor of $D(f)$, then $\A \subseteq P$. Let $\mathcal{C}$ be a set of interventions such that $\B^n$ percolates to $T$ under $f^{\mathcal{C}}$ and for all $(i,c) \in \mathcal{C}$, $i \in I$ and for all $(i,j,c) \in \mathcal{C}$, $j \in I$. Then $\mathcal{C}$ defines a control strategy in $D(f)$ for $P$.
\end{proposition}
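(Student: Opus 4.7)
The plan is to prove the proposition in three steps: (i) show $T$ is also a trap space for the controlled network $f^{\C}$; (ii) use this to localize every attractor of $D(f^{\C})$ inside $T$; (iii) prove that $f$ and $f^{\C}$ agree on $T$, so that the selection property of $T$ with respect to $D(f)$ transfers to $D(f^{\C})$.

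For steps (i) and (ii): $\B^n$ is trivially a trap space for $f^{\C}$, and by the property recalled just before \Cref{prop_PO} (percolation from a trap space yields a sub-trap-space), the chain $\B^n \supseteq F(f^{\C})(\B^n) \supseteq \dots \supseteq F(f^{\C})^k(\B^n) = T$ consists of trap spaces of $f^{\C}$. Hence $T$ is a trap space of $f^{\C}$, so $f^{\C}_i(x) = d_i$ for every $x \in T$ and $i \in I$. Applying \Cref{prop_PO} with $f$ replaced by $f^{\C}$ and target $T$, every attractor of $D(f^{\C})$ is contained in $T$.

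The crucial step is (iii), showing $f^{\C}|_T = f|_T$. For $i \in I$ and $x \in T$ both $f_i(x)$ and $f^{\C}_i(x)$ equal $d_i$, by the trap space property applied to $f$ and to $f^{\C}$ respectively. For $i \notin I$ the targeting hypotheses on $\C$ preclude any node intervention on $i$ and any edge intervention $(m,i,c)$ with regulated node $i$, so $f^{\C}_i = f_i \circ h^{i,\C}$ and the only coordinates where $h^{i,\C}$ modifies its input correspond to node interventions $(m,c) \in \C$ with (necessarily) $m \in I$. For each such $m$ we have $f^{\C}_m \equiv c$, and since $T$ is a trap space for $f^{\C}$ with $m \in I$ this forces $c = d_m$. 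Therefore $h^{i,\C}(x)_m = c = d_m = x_m$ for every $x \in T$, giving $h^{i,\C}(x) = x$ and $f^{\C}_i(x) = f_i(x)$. Hence the STGs of $D(f^{\C})$ and $D(f)$ restricted to $T$ coincide, every attractor of $D(f^{\C})$ contained in $T$ is also an attractor of $D(f)$ contained in $T$, and so by the selection hypothesis on $T$ it is contained in $P$; combined with step (ii) this proves $\C$ is a control strategy for $P$.

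The part I expect to be delicate is exactly step (iii) for $i \notin I$: the consistency and targeting conditions on $\C$ must be combined with the percolation hypothesis to force the matching $c = d_m$ for every node intervention $(m,c) \in \C$ that touches $h^{i,\C}$, so that $h^{i,\C}$ is the identity on $T$. Once this compatibility between interventions and trap space coordinates has been pinned down, the remaining transfer between attractors of $D(f)$ and $D(f^{\C})$ on $T$ is routine.
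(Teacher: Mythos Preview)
Your proof is correct and follows essentially the same route as the paper's: localize attractors of $D(f^{\C})$ in $T$ via percolation, then show $f^{\C}|_T = f|_T$ so that attractors transfer to $D(f)$ and the selection hypothesis on $T$ applies. Your treatment of step~(iii) for $k \notin I$ is in fact more explicit than the paper's, which simply asserts $f_k \circ h^{k,\C}(x) = f_k(x)$ on $T$; you spell out the reason---namely that every node intervention $(m,c)\in\C$ has $m\in I$ and, since $f^{\C}_m\equiv c$ while $T$ is a trap space of $f^{\C}$, necessarily $c=d_m=x_m$ on $T$, so $h^{k,\C}$ is the identity there.
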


\begin{proof}
Let $\A \subseteq \B^n$ be an attractor for $D(f^{\mathcal{C}})$. Since $\B^n$ percolates to $T$ under $f^{\mathcal{C}}$, for every $x \in \B^n$, in particular for every $x \in \A$, there exists a path in $D(f^{\mathcal{C}})$ from $x$ to some $y \in T$. Therefore, $\A \subseteq T$. Since $\B^n$ percolates to $T$ under $f^{\mathcal{C}}$, $T$ is also a trap space in $f^{\mathcal{C}}$, so $f^{\mathcal{C}}_k(x) = d_k = f_k(x)$ for all $k \in I$ and $x \in T$. Since for all $(i,c) \in \mathcal{C}$, $i \in I$ and for all $(i,j,c) \in \mathcal{C}$, $j \in I$, we have that $f^{\mathcal{C}}_k(x) = f_k \circ h^{k,\mathcal{C}} (x) = f_k(x)$ for all $k \notin I$ and $x \in T$. Consequently, $f^{\mathcal{C}}(x) = f(x)$ for all $x \in T$. Since $\A \subseteq T$ and for all $x \in T, f(x) = f^{\mathcal{C}}(x)$, $\A$ is also an attractor of $D(f)$ and, therefore, $\A \subseteq P$.
\end{proof}

In the case $\mathcal{C} = \mathcal{N}$, the condition of $i \in I$ for all $(i,c) \in \mathcal{C}$ corresponds to $T \subseteq \Omega$, where $\Omega$ is the subspace associated to $\mathcal{N}$ \cite{control_trap_spaces}.

We call the control strategies satisfying the conditions of \Cref{prop_TS} \emph{control strategies via trap spaces}. \Cref{ex:nonperc} shows an example of this type of control strategy. $T = **0$ is a selected trap space for the target $P = 110$, since it contains only the attractor $\A = 110$. Consequently, $\N = \{(3,0)\}$ is a control strategy for $P$. Note that $\B^n$ does not percolate to $P$ under $f^{\N}$ but percolates to $T$.

We can easily identify all the selected trap spaces if the attractors of the Boolean network are known or, alternatively, if they can be approximated by minimal trap spaces \cite{klarner_attractor_approx}, that is, if each minimal trap space contains only one attractor and every attractor is included in a minimal trap space. Although attractor identification can be hard to achieve depending on the particular problem, the second property is easier to verify and is relatively common in Boolean networks modeling biological systems \cite{klarner_attractor_approx}.

Control strategies by direct percolation do not depend on the update. On the other hand, the selected trap spaces are defined in terms of the attractors, which might vary in different updates. As a consequence, control strategies via trap spaces are also in general update-dependent. This provides the method with enough flexibility to identify control strategies that are valid in one update but not in another.

A further advantage of the control strategies identified by \Cref{prop_TS} is that they allow for the control interventions to be eventually released. Once a selected trap space is reached, the system will remain in the trap space, regardless of whether the control interventions are active or not. This additional property widens the range of possible choices for control since interventions relying on agents that decay over time could also be considered.

%%%%%%%%%%%%%%%%%%%%%%%%%%%%%%%%%%%%%%%%%%%%%

\section{Methods: control strategy computation} \label{Computation}

The methods for control strategy identification presented in this work are based on the identification of sets of interventions that cause the state space to percolate either to the target subspace or to one of the selected trap spaces under the controlled function. Identifying all the minimal control strategies of this type might entail the exploration of all possible sets of interventions, whose number grows exponentially with the size of the network (for node control) or with the number of edges (for edge control).

The use of Answer Set Programming (ASP) was proposed by Kaminski et al. \cite{control_asp} to deal with the combinatorial explosion associated with node control. Answer Set Programming is a form of declarative programming that works well with hard combinatorial, search and optimization problems. This type of problems often entail a decision-making process over a set of candidates to decide whether they satisfy a specified constraint and possibly identify an optimised output. In order to solve a problem with ASP, one needs to provide a description of the problem using logical rules. Solving the original problem is then reduced to identifying the solutions of its corresponding logic program.

In \cite{control_asp_trap_spaces} we extended the work done in \cite{control_asp} to identify the control strategies presented in \cite{control_trap_spaces}. In this section, we recall the implementation of \cite{control_asp_trap_spaces} and extend it to deal with edge control.

\subsection{Problem encoding}

\definecolor{codegreen}{rgb}{0,0.6,0}
\definecolor{codegray}{rgb}{0.5,0.5,0.5}
\definecolor{codepurple}{rgb}{0.58,0,0.82}
\definecolor{backcolour}{rgb}{0.95,0.95,0.92}

\lstdefinestyle{mystyle}{
    backgroundcolor=\color{gray!10},   
    numberstyle=\tiny\color{codegray},
    basicstyle=\ttfamily\scriptsize,
    breakatwhitespace=false,         
    breaklines=true,                 
    captionpos=t,                    
    keepspaces=true,                 
    numbers=left,                    
    numbersep=5pt,                  
    showspaces=false,                
    showstringspaces=false,
    showtabs=false,                  
    tabsize=2
}

The ASP encoding for control strategy identification consist of two parts: the encoding of the control problem (program instance) and the encoding of the computation process (main program). The encoding of the control problem includes the Boolean function, the target subspaces and selected trap spaces, the limit size of the control strategies and the restrictions on the nodes and edges that can be used for control.

The Boolean function is encoded from its complete disjuctive normal form (DNF), which consists of the disjunction of its prime implicants, as described in \cite{Caspo}. Every component of the network is declared in the literal {\ttfamily variable} (line 1). We allow the possibility of excluding certain interventions from the control candidates, for instance in the case that an intervention is not feasible for application. The node and edge interventions that we want to exclude from the control are declared in the literals {\ttfamily avoid$\_$node} or {\ttfamily avoid$\_$edge} respectively (line 2). The Boolean network from the example in \Cref{ex:nonperc} is encoded as follows. The literal {\ttfamily formula} (line 4) links every variable with its DNF, described by the literals {\ttfamily dnf} and {\ttfamily clause} (lines 5-8). The regulatory function of the first component $f_1(x) = x_1 \bar{x}_3 \lor \bar{x}_2\bar{x}_3$ is declared in the literal {\ttfamily formula(x1,0)} (line 4) and linked to its two clauses {\ttfamily dnf(0,0)} and {\ttfamily dnf(0,1)} (line 5). The first clause $x_1 \bar{x}_3$ is encoded in the literals {\ttfamily clause(0, x1, 1)} and {\ttfamily clause(0, x3, -1)} (line 6). Note that we use {\ttfamily -1} and {\ttfamily 1} in the third variable of the literal {\ttfamily clause} to denote whether a variable is negated or not, respectively. To ease the encoding, we also use the value {\ttfamily -1} to represent the Boolean value $0$ in the rest of the program.

\begin{lstlisting}[style=mystyle, escapechar=!]
 variable(x1). variable(x2). variable(x3). 
 avoid_node(x1). avoid_edge(x1,x1). avoid_edge(x1,x2). avoid_edge(x1,x3). 

 formula(x1, 0). formula(x2, 1). formula(x3, 2). 
 dnf(0, 0). dnf(0, 1). dnf(1, 2). dnf(1, 3). dnf(2, 4). dnf(2, 5). 
 clause(0, x1, 1). clause(0, x3, -1). clause(1, x2, -1). clause(1, x3, -1). 
 clause(2, x3, 1). clause(3, x1, 1). clause(4, x2, 1). clause(4, x3, 1).
 clause(5, x1, 1). clause(5, x3, 1). 
\end{lstlisting} 

The target subspace and the target trap spaces are encoded in the literal {\ttfamily subspace} (line 9). We use two types of identifier: positive and negative. The positive identifier marks the subspace as a selected trap space and the negative identifier marks it as the direct target. The fixed variables of each subspace are encoded in the variable {\ttfamily goal} (line 11) as in \cite{control_asp} and \cite{control_asp_trap_spaces}. A limit size on the number of interventions is set in line 13.

\begin{lstlisting}[style=mystyle, escapechar=!, firstnumber=last]
 subspace(-1). subspace(0). subspace(1). 
 goal(-1, x3, -1). goal(-1, x2, 1). goal(-1, x1, 1).
 goal(0, x3, -1). goal(0, x2, 1). goal(0, x1, 1). goal(1, x3, -1). 

 #const maxsize=3.
\end{lstlisting}

The main ASP program for control strategy identification can be divided in four parts: candidate instantiation, new controlled function instantiation (only necessary for edge interventions), percolation step and satisfaction requirements. The first two parts differ in node and edge control whereas the last two are the same. In the following, we describe each step in detail.

The candidate instantiation for node control is adapted from \cite{control_asp} as described in \cite{control_asp_trap_spaces}. Line 6 is required for control via trap spaces. When the state space percolates to a selected trap space, it ensures that the candidate interventions are chosen among the variables fixed in the trap space, as required in \Cref{prop_TS}.

\begin{lstlisting}[style=mystyle, escapechar=!, label={asp_candidates_node}]
 goal(T,S) :- goal(Z,T,S), Z < 0.
 satisfy(V,W,S) :- formula(W,D); dnf(D,C); clause(C,V,S).
 closure(V,T)   :- goal(V,T).
 closure(V,S*T) :- closure(W,T); satisfy(V,W,S); not goal(V,-S*T).
 { node(V,S) : closure(V,S), candidate(V), satisfied(Z), Z < 0 }.
 { node(V,S) : goal(Z,V,S), candidate(V), satisfied(Z), Z >=0 }.
 :- node(V,1); node(V,-1).
 node(V) :- node(V,S).
\end{lstlisting}

The candidate instantiation for edge control is shown below. First, the candidate edge interventions are generated (lines 9-10). Note that we only consider interventions $(i, j, c)$ such that $i$ regulates $j$ and that we exclude forbidden edges. As in node control, we exclude contradictory interventions (line 11). We keep track of the controlled edges in the variable {\ttfamily edge(Vi,Vj)} (line 12).

\begin{lstlisting}[style=mystyle, escapechar=!,firstnumber=last]
 { edge(Vi,Vj,1) } :- formula(Vj,D), dnf(D,C), clause(C,Vi,S), not avoid_edge(Vi,Vj).
 { edge(Vi,Vj,-1) } :- formula(Vj,D), dnf(D,C), clause(C,Vi,S), not avoid_edge(Vi,Vj).
 :- edge(Vi,Vj,1), edge(Vi,Vj,-1).
 edge(Vi,Vj) :- edge(Vi,Vj,S).
\end{lstlisting}

When considering node and edge interventions together, the following restrictions are also added, to ensure that the set of interventions is consistent.

\begin{lstlisting}[style=mystyle, escapechar=!,firstnumber=last]
 :- node(V), edge(V,Vj).
 :- node(V), edge(Vi,V).
\end{lstlisting}

The effect of the edge interventions in the Boolean function is captured before the percolation step. Note that this step is not needed for node control. The literals {\ttfamily new$\_$clause}, {\ttfamily new$\_$dnf}, {\ttfamily new$\_$formula} are instantiated for every term, clause and DNF respectively that are not fixed by edge interventions. Thus, these literals represent the DNF of the controlled Boolean function. In particular, {\ttfamily new$\_$clause(C,V,S)} is instantiated for every term that is not affected by an edge intervention (line 15). {\ttfamily remove$\_$dnf} captures the clauses of the DNF that are set to 0 by edge interventions and, consequently, are not part of the controlled function (line 16). {\ttfamily new$\_$dnf} is then instantiated for the rest of the DNF clauses (line 17). In a similar way, {\ttfamily remove$\_$formula} captures the regulatory functions that have a clause in the DNF set to 1 by edge interventions, since they become the constant 1 and do not have an associated DNF (line 18). In this case, the literal {\ttfamily fixed$\_$node(V,1)} is also instantiated to indicate that the component {\ttfamily V} is set to 1 (line 21). The remaning regulatory functions with at least one DNF clause different from 0 are captured in {\ttfamily new$\_$formula} (line 19). If all DNF clauses are set to 0 by edge interventions, the regulatory function becomes the constant 0 and the literal {\ttfamily fixed$\_$node(V,-1)} is instantiated to indicate that the component {\ttfamily V} is set to 0 (line 22). 

\begin{lstlisting}[style=mystyle, escapechar=!,firstnumber=last]
 new_clause(C,V,S) :- clause(C,V,S); dnf(D,C); formula(Vj,D); not edge(V,Vj).
 remove_dnf(D,C):- clause(C,Vi,-S); edge(Vi,Vj,S); dnf(D,C); formula(Vj,D).
 new_dnf(D,C) :- new_clause(C,Vi,S); dnf(D,C); formula(Vj,D); not remove_dnf(D,C).
 remove_formula(Vj,D) :- dnf(D,C); formula(Vj,D); edge(Vi,Vj,S) : clause(C,Vi,S).
 new_formula(V,D) :- new_dnf(D,C); formula(V,D); not remove_formula(V,D).

 fixed_node(V,1) :- remove_formula(V,D).
 fixed_node(V,-1) :- not remove_formula(V,D); not new_formula(V,D); formula(V,D).
\end{lstlisting}

The regulatory functions that become constants either through node or edge control are captured in the literals {\ttfamily intervention(V,S)} and {\ttfamily intervention(V)}.

\begin{lstlisting}[style=mystyle, escapechar=!,firstnumber=last]
intervention(V,S) :- node(V,S).
intervention(V,S) :- not node(V,S), not node(V,-S), fixed_node(V,S).
intervention(V) :- intervention(V,S).
\end{lstlisting}

The percolation effect is then encoded in the same way as described in \cite{control_asp} using the literal {\ttfamily intervention(V,S)} (lines 26-30).

\begin{lstlisting}[style=mystyle, escapechar=!, firstnumber=last]
 eval_formula(Z,V,S) :- subspace(Z); intervention(V,S).
 free(Z,V,D) :- new_formula(V,D); subspace(Z); not intervention(V).
 eval_clause(Z,C,-1) :- new_clause(C,V,S); eval_formula(Z,V,-S).
 eval_formula(Z,V,1) :- free(Z,V,D); eval_formula(Z,W,T) : new_clause(C,W,T); new_dnf(D,C).
 eval_formula(Z,V,-1) :- free(Z,V,D); eval_clause(Z,C,-1) : new_dnf(D,C).
\end{lstlisting}

Finally, we ensure that a candidate subspace is a control strategy as described in \cite{control_asp_trap_spaces}, by requiring that at least one subspace constraint is satisfied (lines 31-33). A limitation on the number of interventions is also added (line 34). 

\begin{lstlisting}[style=mystyle, escapechar=!, label={asp_requierements},firstnumber=last]
 not satisfied(Z) :- goal(Z,T,S), not eval_formula(Z,T,S), subspace(Z).
 satisfied(Z) :- eval_formula(Z,T,S) : goal(Z,T,S); subspace(Z).
 0 < { satisfied(Z) : subspace(Z) }.
 :- maxsize > 0; maxsize + 1 { node(V,R); edge(Vi,Vj,S) }.
\end{lstlisting}

\subsection{Main algorithm} \label{sub:algo}

The algorithm for control strategy identification is detailed in \Cref{alg:ca}. It takes as inputs the Boolean function $f$, the target subspace $P$, the type of control method $method$, the limit size for the control strategies $m$ and, optionally, the list of attractors $attr$ (line 1). The Boolean function, the target subspace, the selected trap spaces and the limit size are used as input for the ASP program described above in \emph{createCandidatesAndPercolate}, which computes the corresponding control strategies. The Boolean function is given as a list of prime implicants so that it can be directly encoded as a complete DNF.

\begin{algorithm}[H]
\renewcommand{\thealgorithm}{1}
\caption{Control strategies for a target subspace}\label{alg:ca}
\begin{algorithmic}[1]
\Function{ControlStrategies}{$f$, $P$, $method$, $m$, $attr$}
	\If {method = ``direct percolation''}:
		\State \textbf{CS} $\gets$ createCandidatesAndPercolate($f$, $P$, $-$, $m$)
	\Else:
		\State \textbf{T} $\gets$ trapSpaces($f$)
		\State \textbf{selTS} $\gets$ selectedTrapSpacesType1(\textbf{T}, $P$)
		\If {$attr \neq \emptyset$}:
			\State \textbf{selTS} $\gets$ \textbf{selTS} + selectedTrapSpacesType2(\textbf{T}, $P$, $attr$)
		\EndIf
	\EndIf
	\If {method = ``trap spaces''}:
	\State \textbf{CS} $\gets$ createCandidatesAndPercolate($f$, $-$, \textbf{selTS}, $m$)
	\EndIf
	\If {method = ``combined''}:
		\State \textbf{CS} $\gets$ createCandidatesAndPercolate($f$, $P$, \textbf{selTS}, $m$)
	\EndIf
	\State \Return \textbf{CS}
\EndFunction
\end{algorithmic}
\end{algorithm}

\Cref{alg:ca} allows for the computation of control strategies by direct percolation (lines 2-3), via the trap spaces method (lines 9-10) and using the two methods combined, meaning that both percolation to the target subspace and selected trap spaces is considered (lines 11-12). When searching for control strategies via the trap spaces, we distinguish two types of selected trap spaces: trap spaces contained in $P$ (Type 1) (line 6) and trap spaces not contained in $P$ but containing only attractors in $P$ (Type 2) (line 8). Note that selected trap spaces of Type 2 are only identified when all the attractors are known or can be approximated by minimal trap spaces (line 7). Moreover, in order to avoid unnecessary calculations, we only consider non-percolating trap spaces, that is, trap spaces that do not  percolate to smaller ones, since all the subspaces percolating to a trap space $T$ also percolate to $F(f)(T)$.

We implemented \Cref{alg:ca} using PyBoolNet \cite{PyBoolNet}, a Python package for the generation, modification and analysis of Boolean networks. PyBoolNet also provides an efficient computation of trap spaces for relatively large networks, which we use for the computation of the selected trap spaces, and a method to check whether the attractors of a Boolean network can be approximated by minimal trap spaces \cite{klarner_attractor_approx}. To solve the ASP problem, we use \emph{clingo}, developed by Potassco, the Potsdam Answer Set Solving Collection \cite{clingo}. The source code of the implementation of \Cref{alg:ca} is available at \url{https://github.com/Lauracf/trap-space-control}.

%%%%%%%%%%%%%%%%%%%%%%%%%%%%%%%%%%%%%%%%%%%%%

\section{Results} \label{Application}

In this section, we consider the applicability of our method to a cell fate decision model. We study the network introduced by Grieco et al. (2013) \cite{mapk_network} to model the effect of the Mitogen-Activated Protein Kinase (MAPK) pathway on cell fate decisions in bladder cancer cells (see \Cref{fig:mapk}). The network consists of 53 Boolean variables, including the four inputs DNA-damage, EGFR-stimulus, FGFR3-stimulus and TGFBR-stimulus. The states of the three outputs of the network (Apoptosis, Growth-Arrest and Proliferation) indicate the enablement or disablement of the corresponding processes that represent the different cell fates or phenotypes considered in \cite{mapk_network}.

\begin{figure}[h!]
\centering
\includegraphics[width=0.95\textwidth]{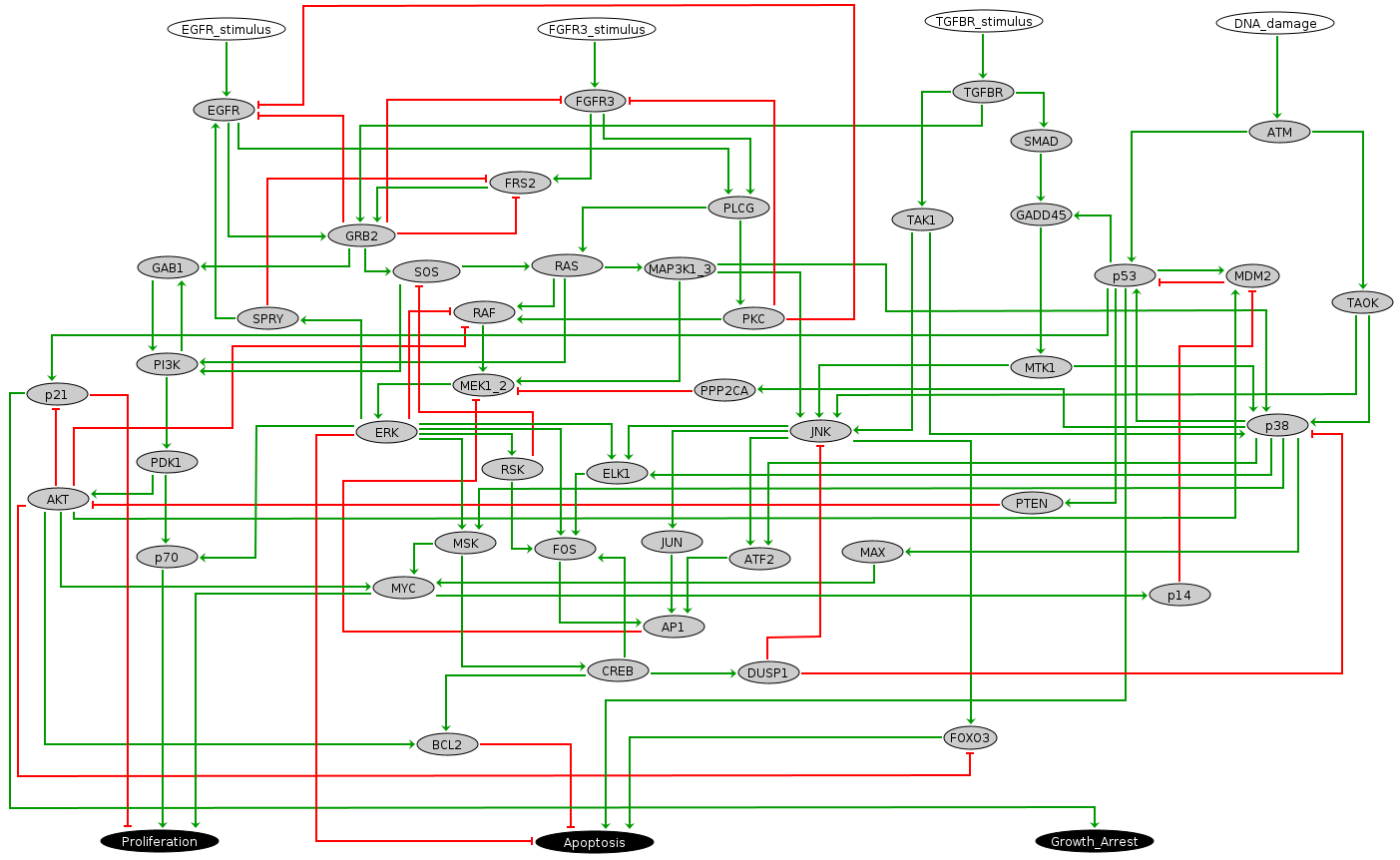}
\caption{MAPK network presented in \cite{mapk_network}. Figure obtained using GINsim \cite{GINsim}. Input and output nodes are colored in white and black respectively. Green edges denote activations and red edges denote inhibitions.}
\label{fig:mapk}
\end{figure}

There are 18 attractors in the asynchronous dynamics, of which 12 are steady states and 6 are cyclic. The attractors are in one-to-one correspondence with the minimal trap spaces, that is, each attractor is contained in a minimal trap space and each minimal trap space only contains one attractor \cite{klarner_attractor_approx}. Therefore, we can use the selected trap spaces of Type 1 and Type 2 (see \Cref{sub:algo}) to search for control strategies via trap spaces.

In the first part, we target the subspace defined by the apoptotic phenotype and compare the control strategies identified via trap spaces to the ones by direct percolation, first for node control and then for edge control. In the second part, we consider the attractors of the asynchronous dynamics by targeting the minimal trap spaces. We compare the control strategies identified by direct percolation, via trap spaces and with the combination of the two methods (see \Cref{alg:ca}) for four steady states for node and edge control. In all the cases, we obtain new control strategies via trap spaces missed by direct percolation.

\subsection{Target: apoptotic phenotype}

We start by considering as target the apoptotic phenotype that is defined by the subspace obtained by fixing Apoptosis to 1, Growth-Arrest to 1 and Proliferation to 0 as in \cite{mapk_network}. We refer to this subspace as the apoptotic target. We identify 103 non-percolating selected trap spaces. We set a limit size of three interventions.

In this setting, we obtain 271 control strategies for node control up to size 3: three of size 1, 106 of size 2, 162 of size 3. The three control strategies of size 1 are $\{$(TGFBR-stimulus, 1)$\}$, $\{$(TGFBR, 1)$\}$ and $\{$(DNA-damage, 1)$\}$, the last one being obtained only via trap spaces. Under the control strategy $\{$(DNA-damage, 1)$\}$, the state space percolates to the trap space $\{$(ATM, 1), (DNA-damage, 1), (TAOK, 1)$\}$, which contains only attractors in the apoptotic target. This minimal control strategy is not identified by direct percolation. \Cref{tab:mapknode} shows the number of control strategies identified by each method. Note that the number of control strategies of size 2 and 3 is lower when the methods are used in combination, since some of the control strategies identified by direct percolation are non-minimal and they are included in the control strategy of size 1 that is not detected.

\begin{table}[h!]
\caption{Number and size of the control strategies identified by the different methods up to size 3 for the apoptotic target using node control.}
\adjustbox{scale=0.84, center}{
\begin{tabu}{|l|[2pt]c|c|c|}
\hline
 & $| \mathcal{N} | = 1$ & $| \mathcal{N} | = 2$ & $| \mathcal{N} | = 3$ \\
\hline
By direct percolation & 2 & 124 & 175 \\
Via trap spaces & 2 & 0 & 0 \\
\hline
Combined & 3 & 106 & 162 \\
\hline
\end{tabu}
}
\label{tab:mapknode}
\end{table}

Using edge control we obtain 950 control strategies up to size 3: three of size 1, 117 of size 2 and 830 of size 3. \Cref{tab:mapkedge} shows the number of edge control strategies identified by each method. The three edge control strategies of size 1 are equivalent to the node interventions identified as control strategies of size 1. This results from the three variables involved in the size 1 node control strategies having a unique incoming edge. For example (TGFBR-stimulus, TGFBR, 1) has exactly the same effect as (TGFBR, 1), since TGFBR is uniquely regulated by TGFBR-stimulus.

\begin{table}[h!]
\caption{Number and size of the control strategies identified by the different methods up to size 3 for the apoptotic target using edge control.}
\adjustbox{scale=0.84, center}{
\begin{tabu}{|l|[2pt]c|c|c|}
\hline
 & $| \mathcal{E} | = 1$ & $| \mathcal{E} | = 2$ & $| \mathcal{E} | = 3$ \\
\hline
By direct percolation & 2 & 137 & 893 \\
Via trap spaces & 2 & 0 & 0 \\
\hline
Combined & 3 & 117 & 830 \\
\hline
\end{tabu}
}
\label{tab:mapkedge}
\end{table}

In other cases, edge control allows intervention strategies that would be too restrictive in node control. For example, the two edge interventions (MAP3K1-3, p38, 1) and (MSK, CREB, 0), which fix the activation of MAP3K1-3 in p38 and the inhibition of MSK in CREB, lead the controlled system to percolate to the apoptotic target. However, fixing MAP3K1-3 to 1 and MSK to 0 does not, since the controlled system displays non-apoptotic steady states, which are not present in the original dynamics.

Allowing the combination of node and edge interventions, we obtain over three thousand control strategies up to size 3. Note that these include all the control strategies obtained for node and edge control. In particular, the six control strategies of size 1 correspond to the three control strategies of node control and the three of edge control. The number of control strategies identified by each method is shown in \Cref{tab:mapkcomb}.

\begin{table}[h!]
\caption{Number and size of the control strategies identified by the different methods up to size 3 for the apoptotic target combining node and edge control.}
\adjustbox{scale=0.84, center}{
\begin{tabu}{|l|[2pt]c|c|c|}
\hline
 & $| \mathcal{C} | = 1$ & $| \mathcal{C} | = 2$ & $| \mathcal{C} | = 3$ \\
\hline
By direct percolation & 4 & 530 & 3569 \\
Via trap spaces & 4 & 0 & 0 \\
\hline
Combined & 6 & 454 & 3299 \\
\hline
\end{tabu}
}
\label{tab:mapkcomb}
\end{table}

We observe that there are many control strategies that mix node and edge interventions. Most of them include interventions already appearing in control strategies consisting exclusively of node interventions or of edge interventions. In some cases, we find mixed control strategies that are equivalent to a node control strategy or an edge control strategy where a node intervention is substituted by an equivalent edge intervention or vice versa. For example, the control strategy $\{$(CREB, DUSP1, 0), (TAOK, 1)$\}$ is equivalent to the control strategy $\{$(CREB, DUSP1, 0), (ATM, TAOK, 1)$\}$, since the node intervention (TAOK, 1) is equivalent to the edge intervention (ATM, TAOK, 1). There are also control strategies involving interventions that are not part of any node or edge control strategy. This is the case for 
$\{$(FGFR3, FRS2, 1), (GRB2, FRS2, 0), (p38, 1)$\}$, where neither (FGFR3, FRS2, 1) nor (GRB2, FRS2, 0) appear in any edge control strategy.

\subsection{Target: minimal trap spaces}

When computing control strategies for the minimal trap spaces, the input components need to be fixed in order to ensure that their value matches the one fixed in the target. Since each input combination identifies a separate trap space, there is at least one attractor per input combination. There are sixteen possible input combinations, fourteen of which identify subspaces that contain a unique attractor. These input combinations therefore give minimal node control strategies for the corresponding attractors. The subspaces induced by the two remaining input combinations ($\mathcal{S}_1 = \{$EGFR-stimulus = 0, FGFR3-stimulus = 0, TGFBR-stimulus = 0 and DNA-damage = 0$\}$, $\mathcal{S}_2 = \{$EGFR-stimulus = 0, FGFR3-stimulus = 0, TGFBR-stimulus = 0 and DNA-damage = 1$\}$) contain two steady states each and, therefore, further control interventions are needed. \Cref{tab:mapknode_ss}, \Cref{tab:mapkedge_ss} and \Cref{tab:mapkcomb_ss} show the number and size of the control strategies up to size 7 (the number of inputs plus three) of these four steady states for node, edge and mixed control respectively. Note that in all the cases there are control strategies identified via trap spaces not captured by direct percolation and there is no minimal control strategy identified by direct percolation missed via trap spaces. Moreover, no control strategy of size 5 is found for direct percolation for any of the steady states.

\begin{table}[h!]
\caption{Number and size of the control strategies identified by the different methods up to size 7 for the different steady states using node control. Note that there is no control strategy up to size 4. $s_1$ and $s'_1$ denote the two steady states in $\mathcal{S}_1$ and $s_2$ and $s'_2$ the two steady states in $\mathcal{S}_2$.}
\adjustbox{scale=0.84, center}{
\begin{tabu}{|l|[2pt]c|c|c|[2pt]c|c|c|[2pt]c|c|c|[2pt]c|c|c|}
\hline
 & \multicolumn{3}{c|[2pt]}{$s_1$} & \multicolumn{3}{c|[2pt]}{$s'_1$} & \multicolumn{3}{c|[2pt]}{$s_2$} & \multicolumn{3}{c|}{$s'_2$} \\
\hline
$| \N |$ & 5 & 6 & 7 & 5 & 6 & 7 & 5 & 6 & 7 & 5 & 6 & 7 \\
\hline
\cline{1-13}
\cline{1-13}
By direct percolation & 0 & 0 & 60 & 0 & 0 & 32 & 0 & 14 & 2 & 0 & 14 & 2 \\
Via trap spaces & 2 & 0 & 0 & 0 & 8 & 12 & 0 & 22 & 14 & 2 & 0 & 0 \\
\hline
Combined & 2 & 0 & 0 & 0 & 8 & 12 & 0 & 22 & 14 & 2 & 0 & 0 \\
\hline
\end{tabu}
}
\label{tab:mapknode_ss}
\end{table}

\begin{table}[h!]
\caption{Number and size of the control strategies identified by the different methods up to size 7 for the different steady states using edge control. Note that there is no control strategy up to size 4. $s_1$ and $s'_1$ denote the two steady states in $\mathcal{S}_1$ and $s_2$ and $s'_2$ the two steady states in $\mathcal{S}_2$.}
\adjustbox{scale=0.84, center}{
\begin{tabu}{|l|[2pt]c|c|c|[2pt]c|c|c|[2pt]c|c|c|[2pt]c|c|c|}
\hline
 & \multicolumn{3}{c|[2pt]}{$s_1$} & \multicolumn{3}{c|[2pt]}{$s'_1$} & \multicolumn{3}{c|[2pt]}{$s_2$} & \multicolumn{3}{c|}{$s'_2$} \\
\hline
$| \E |$ & 5 & 6 & 7 & 5 & 6 & 7 & 5 & 6 & 7 & 5 & 6 & 7 \\
\hline
\cline{1-13}
\cline{1-13}
By direct percolation & 0 & 0 & 150 & 0 & 0 & 84 & 0 & 22 & 58 & 0 & 33 & 50 \\
Via trap spaces & 6 & 11 & 157 & 0 & 16 & 168 & 0 & 50 & 72 & 6 & 33 & 40 \\
\hline
Combined & 6 & 11 & 157 & 0 & 16 & 168 & 0 & 50 & 72 & 6 & 33 & 40 \\
\hline
\end{tabu}
}
\label{tab:mapkedge_ss}
\end{table}

\begin{table}[h!]
\caption{Number and size of the control strategies identified by the different methods up to size 7 for the different steady states combining node and edge control. Note that there is no control strategy up to size 4. $s_1$ and $s'_1$ denote the two steady states in $\mathcal{S}_1$ and $s_2$ and $s'_2$ the two steady states in $\mathcal{S}_2$.}
\adjustbox{scale=0.84, center}{
\begin{tabu}{|l|[2pt]c|c|c|[2pt]c|c|c|[2pt]c|c|c|[2pt]c|c|c|}
\hline
& \multicolumn{3}{c|[2pt]}{$s_1$} & \multicolumn{3}{c|[2pt]}{$s'_1$} & \multicolumn{3}{c|[2pt]}{$s_2$} & \multicolumn{3}{c|}{$s'_2$} \\
\hline
$| \C |$ & 5 & 6 & 7 & 5 & 6 & 7 & 5 & 6 & 7 & 5 & 6 & 7 \\
\hline
\cline{1-13}
\cline{1-13}
By direct percolation & 0 & 0 & 12720 & 0 & 0 & 7040 & 0 & 1168 & 2608 & 0 & 1440 & 2048 \\
Via trap spaces & 128 & 176 & 2488 & 0 & 768 & 6848 & 0 & 2320 & 4528 & 128 & 528 & 616 \\
\hline
Combined & 128 & 176 & 2488 & 0 & 768 & 6848 & 0 & 2320 & 4528 & 128 & 528 & 616 \\
\hline
\end{tabu}
}
\label{tab:mapkcomb_ss}
\end{table}

As in the case of the apoptotic target, there are edge control strategies allowing interventions that would not be possible using only node control. For example fixing the component GRB2 either to 0 or to 1, together with the corresponding input interventions, does not lead to a system with $s_1$ as the unique attractor. However, fixing GRB2 in the edge intervention (GRB2, GAB1, 1) in addition to the input interventions leads to a controlled dynamics that has $s_1$ as a unique attractor.

When considering mixed interventions, in contrast to the apoptotic target case, all the interventions appearing in minimal control strategies also occur in some strategy composed exclusively of node or exclusively of edge interventions. As can be seen from the numbers in \Cref{tab:mapkcomb_ss}, we still gain many mixed control strategies and thus more flexibility for choosing interventions that are both realizable in the lab and as non-invasive as possible for the system.

\subsection{Running times}

All the results presented here were obtained with a regular desktop 8-processor computer, Intel\textsuperscript{\textregistered}Core\textsuperscript{\tiny TM} i7-2600 CPU at 3.40GHz, 16GB memory. The running times of the control strategy computation for each method, target and type of control are shown in \Cref{tab:times}. These refer to the total times needed for \Cref{alg:ca} to terminate, including the computation of the selected trap spaces when needed. We can observe how the number of candidate interventions affects the time required for each method. Node control is the fastest, around a few centiseconds, whereas edge control requires a few seconds. The running times of the different methods vary from a few seconds to a few minutes when combining the two types of interventions. Although the apoptotic phenotype is the target with the highest number of selected trap spaces, we do not observe a significant increase of the running time with respect to the steady states. 
This could perhaps result from the additional constraint on candidate interventions that is imposed when working with selected trap spaces, requiring the interventions to be selected among the variables fixed in the trap space.

\begin{table}[h!]
\caption{Running times (in seconds) for the control strategy computation targeting the apoptotic phentoype and the four steady states in the MAPK network.}
\adjustbox{scale=0.84, center}{
\begin{tabular}{|c|c|c|c|c|c|c|}
\hline
\multirow{2}*{Target} & \multirow{2}*{Target size} & Number of selected & \multirow{2}*{Method} & \multicolumn{3}{c|}{Time (s)} \\
\cline{5-7}
& & trap spaces & & Node & Edge & Both \\
\hline
\multirow{3}*{Apoptotic phenotype} & \multirow{3}*{3} & \multirow{3}*{103} & By direct percolation & 0.18 & 7.05 & 280.18 \\
& & & Via trap spaces & 0.47 & 0.82 & 0.74 \\
& & & Combined & 7.15 & 117.36 & 461.37 \\
 \hline
\multirow{3}*{Steady state $s_1$} & \multirow{3}*{53} & \multirow{3}*{3} & By direct percolation & 0.04 & 3.98 & 248.19 \\
& & & Via trap spaces & 0.06 & 13.04 & 369.84 \\
& & & Combined & 0.07 & 15.89 & 1681.83 \\
 \hline
\multirow{3}*{Steady state $s'_1$} & \multirow{3}*{53} & \multirow{3}*{3} & By direct percolation & 0.03 & 0.90 & 25.69 \\
& & & Via trap spaces & 0.07 & 2.17 & 103.20 \\
& & & Combined & 0.08 & 3.12 & 287.65 \\
 \hline
\multirow{3}*{Steady state $s_2$} & \multirow{3}*{53} & \multirow{3}*{2} & By direct percolation & 0.04 & 0.32 & 7.20 \\
& & & Via trap spaces & 0.08 & 2.46 & 155.22 \\
& & & Combined & 0.07 & 2.64 & 298.30 \\
 \hline
\multirow{3}*{Steady state $s'_2$} & \multirow{3}*{53} & \multirow{3}*{2} & Direct percolation & 0.04 & 0.93 & 6.61 \\
& & & Via trap spaces & 0.07 & 6.04 & 18.54 \\
& & & Combined & 0.08 & 6.52 & 26.59 \\
\hline
\end{tabular}
}
\label{tab:times}
\end{table}

%%%%%%%%%%%%%%%%%%%%%%%%%%%%%%%%%%%%%%%%%%%%%

\section{Discussion}

The method presented in this work provides a new tool for control strategy identification, based on value percolation, that uses trap spaces to identify potentially smaller control strategies that could be missed by usual percolation-based methods. This approach implements the standard node interventions acting on specific components, as well as edge interventions acting on interactions between them. Considering edge interventions widens the range of possible control strategies, for example when restrictions on the components that can be subject to intervention prevents the applicability of node control for a desired target. It can also broaden the possibilities for potential applications, for instance by allowing to act on the specific interaction between two proteins, while preserving their role in other potentially critical cell processes. The examples of edge control strategies shown in the MAPK case study illustrate the diverse and new possibilities offered by edge control (\Cref{Application}).

The formulation of these control problems as Boolean constraint problems in Answer Set Programming (ASP), extending the works from \cite{control_asp} and \cite{control_asp_trap_spaces}, aims to address the challenge of their associated combinatorial explosion. While our implementation can handle state-of-the-art biological models (\Cref{Application}), further experiments are required to fully evaluate the scalability of our extended implementation. Although in biological interaction networks the number of regulators for each component is often small in comparison to the overall number of species, which significantly helps in limiting the computational load, topological properties of the network can have a substantial impact. In particular, our approach requires the identification of some trap spaces (selected trap spaces, \Cref{main_cs}), which are used as inputs for the ASP program. The number of these selected trap spaces can be relatively high, for instance in the case of networks with many steady states, and significantly impact the running times. On the other hand, constraint programs for different selected trap spaces could be solved in parallel, with a post-processing step to ensure minimality of the results. An alternative approach could aim at identifying the relevant selected trap spaces by extending the constraint problem, rather than as a preliminary step, avoiding the costly explicit enumeration.

In this work we deal with node and edge interventions both separately and combined. When mixing the two types of interventions, it is necessary to define a priority order to avoid the inconsistency problems that might arise. Here we consider that node control takes priority over edge control and we forbid contradictory interventions targeting the same component. Further works could include different prioritisation orders and study how these might affect the controllability of a system. Another aspect that needs careful consideration is the definition of optimality for control strategies that can include both node and edge interventions. Here we considered minimality with respect to inclusion exclusively. Specific evaluations of the costs required to implement different control interventions could lead to the formulation of optimization functions more fitting to the specific model.

The use of selected trap spaces could be easily extended to other types of control. A control strategy that drives the dynamics to a selected trap space can be seen as a \emph{transient} control strategy, meaning that the intervention could be applied for a certain period of time, until the trap space is reached, and then be released. We think that trap spaces could be further exploited for the identification of more sophisticated control approaches, like sequential interventions. Given the flexibility and efficacy shown by constraint-based approaches, the extension to these problems, in particular in ASP, should be explored.

\bibliographystyle{abbrv}
\bibliography{references}

\end{document}